\newtheorem{thm}{Theorem}
\newtheorem{cor}{Corollary}
\newtheorem{definition}{Definition}
\begin{document}

\title{SNIFF: Reverse Engineering of Neural Networks with Fault Attacks}

\author{Jakub Breier, Dirmanto Jap, Xiaolu Hou, Shivam Bhasin and Yang Liu
\thanks{J. Breier is with Silicon Austria Labs, TU-Graz SAL DES Lab and Graz University of Technology, Graz, Austria. E-mail: jbreier@jbreier.com}
\thanks{D. Jap and S. Bhasin are with Temasek Laboratories, Nanyang Technological University, Singapore.
E-mail: \{djap, sbhasin\}@ntu.edu.sg}
\thanks{X. Hou is with Faculty of Informatics and Information Technologies, Slovak University of Technology, Slovakia.
E-mail: houxiaolu.email@gmail.com}
\thanks{Y. Liu is with the School of Computer Science and Engineering, Nanyang Technological University, Singapore.
E-mail: yangliu@ntu.edu.sg}
\thanks{
This work has been supported in parts by the ``University SAL Labs'' initiative of Silicon Austria Labs (SAL) and its Austrian partner universities for applied fundamental research for electronic based systems. The authors acknowledge the support from the Singapore National Research Foundation (``SOCure'' grant NRF2018NCR-NCR002-0001 -- www.green-ic.org/socure).
This project has received funding from the European Union's Horizon 2020 Research and Innovation Programme under the Programme SASPRO 2 COFUND Marie Sklodowska-Curie grant agreement No. 945478.}
}

\markboth{IEEE Transactions on Reliability,~Vol.~??, No.~?, October~2020}%
{J. Breier, D. Jap, X. Hou, S. Bhasin, Y. Liu: SNIFF: Reverse Engineering of Neural Networks with Fault Attacks}

\maketitle

\begin{abstract}
Neural networks have been shown to be vulnerable against fault injection attacks.
These attacks change the physical behavior of the device during the computation, resulting in a change of value that is currently being computed.
They can be realized by various techniques, ranging from clock/voltage glitching to application of lasers to rowhammer.
Previous works have mostly explored fault attacks for output misclassification, thus affecting the reliability of neural networks. In this paper we investigate the possibility to reverse engineer neural networks with fault attacks.
\underline{S}ig\underline{n} b\underline{i}t \underline{f}lip \underline{f}ault (SNIFF) attack enables the reverse engineering by changing the sign of intermediate values.
We develop the first exact extraction method on deep-layer feature extractor networks that provably allows the recovery of proprietary model parameters.
Our experiments with Keras library show that the precision error for the parameter recovery for the tested networks is less than $10^{-13}$ with the usage of 64-bit floats, which improves the current state of the art by 6 orders of magnitude.
\end{abstract}
\begin{IEEEkeywords}
Neural networks, deep learning, reverse engineering, fault attacks
\end{IEEEkeywords}
%
\IEEEpeerreviewmaketitle

\section{Introduction}
Neural networks form a basis for current artificial intelligence applications.
They were shown to be effective in domains that can provide large amount of labeled data to be able to learn the classification model with sufficient level of accuracy.
Because of this property, companies often protect their models as the cost of obtaining the data used to train them might be very high, while the availability of such data is limited.
Thus, having a classification model whose internal parameters are secret gives companies a competitive advantage.
It is therefore necessary to know the ways that enable reverse engineering of the models so that adequate protection could be applied.

Model stealing attacks (also called model extraction attacks) aim at retrieving the model parameters in a black-box settings~\cite{tramer2016stealing}.
In this setting, the attacker sends inputs to the network and observes the outputs.
Based on this information, she tries to reconstruct the model that has accuracy close to the original one.
In a similar fashion, it is possible to recover the hyperparameters of machine learning models in general~\cite{wang2018stealing}.

There are certain similarities when it comes to comparing the model stealing attacks with the key recovery attacks on cryptography.
Classical cryptanalysis works by querying the cryptosystem with inputs and observing the outputs.
This helps in getting the information about the secret key.
In the field of cryptography, researchers started observing the physical characteristics of the devices that perform the encryption to find the secret key more efficiently.
Similarly, it was shown that by causing errors during the cryptographic computation, the attacker can learn secret information~\cite{biham1997differential}.
We call these implementation-level attacks \textit{physical attacks} on cryptography.

Now, we can look into the emerging area concerned with physical attacks against neural networks.
It was shown earlier that side-channel attacks can be applied to neural networks to recover certain model parameters~\cite{batina2018csi}.
It was also shown that neural networks are vulnerable to fault injection attacks that change the intermediate values of the model during the computation, enabling misbehavior of the activation functions in the model~\cite{breier2018practical}.
As the fault might also be introduced through external factors, the reliability of the neural network implementations is becoming a growing concern. 
In some cases, a single fault occurring in a GPU could reduce the reliability of a CNN performance~\cite{DBLP:journals/tr/SantosPLDCKR19}. Thus, this could also be exploited by the adversary.
If we change the intermediate values, the model output will change, potentially revealing the information about the model parameters.
We focus on utilizing this behavior to fully recover the values of the internal parameters of the neural network.
More specifically, we utilize a fault that changes the sign of the intermediate values to get the information, hence the name \textit{SNIFF -- \underline{s}ig\underline{n} b\underline{i}t \underline{f}lip \underline{f}ault.}


\textbf{Our contribution.} In this paper, we present a way to reverse engineer neural networks with the help of fault injection attacks.
More specifically, we target deep-layer feature extractor networks produced by transfer learning, to recover the parameters (weights and biases) of the last hidden layer.
Our work mainly focuses on neural network classifiers based on the widely used softmax activation function in the output layer.
On top of that, we also show application to other activation functions.
Our method provably allows \textit{exact extraction}, meaning that the exact values of parameters can be determined after the fault attack.
Thus, in case of a deep-layer feature extractor, this allows to get the exact information on the entire network.
We note that this is the first work using fault injection attack for the model extraction, and also the first work allowing exact extraction.
In terms of \textit{accuracy} and \textit{fidelity} defined recently in~\cite{jagielski2019high}, our work achieves perfect scores in both, as the extracted values are identical to the original network.

\section{Preliminaries}
\label{sec:related}

This section recalls general concepts used in the rest of the paper.
The target datasets and experimental setup are also discussed.

\subsection{Fault Injection Methods}
\label{sec:fi_methods}
Fault injection can be performed with a variety of equipment based on the required precision, cost and impact.

\textit{Clock/voltage glitch} 
methods offer limited precision and are normally used to alter the control flow of the program rather than disturbing the data directly.
This is often referred to as \textit{global} fault injection. 

\textit{Electromagnetic (EM) emanation} is more localized method, where the precision heavily depends on the resolution of the injection probe.
It was shown that precise bit sets and resets in memory cells can be achieved~\cite{ordas2014evidence}.

\textit{Optical radiation} includes methods with varying precision, using equipment ranging from camera flashes to lasers.
The advantage is high reproducibility of faults and great precision -- precise bit flips were shown to be possible with lasers.

\textit{Rowhammer}~\cite{seaborn2015exploiting} and \textit{Voltpwn}~\cite{kenjar2020v0ltpwn} are fault injection methods that do not require a dedicated injection device. Such attacks exploit memory and microarchitectural properties for fault injection, and allow bit flips and software controlled faults.

Besides these, there are other less researched fault injection methods, such as X-rays/gamma rays~\cite{anceau2017nanofocused}, or hardware trojans~\cite{breier2015multiple}.

\subsection{Fault Injection on Neural Networks}
\label{sec:related_fia}
The seminal work in the field of adversarial fault injection was published by Liu et al. in 2017~\cite{liu2017fault}. They introduced two types of attacks: \textit{single bias attack} changes the bias value in either one of the hidden layers (in case of ReLU or similar activation function) or output layer of the network to achieve the misclassification; while \textit{gradient descent attack} works in a similar way as Fast Gradient Sign Method~\cite{goodfellow2014explaining}, but changes the internal parameters instead of the input to the network.

Practical fault injection by using a laser technique was shown by Breier et al. in 2018~\cite{breier2018practical}.
They were able to disturb the instruction execution inside the general-purpose microcontroller to achieve the change of the neuron output.
In their paper, they focused on behavior of three activation functions: in case of sigmoid and tanh, the fault resulted in an inverted output, while in case of ReLU, the output was forced to be always zero.
The work was further extended in~\cite{hou2020security} and~\cite{hou2021physical} to show different attack strategies to improve the misclassification efficiency.

A comprehensive evaluation of bitwise corruptions on various deep learning models was presented by Hong et al. in 2019~\cite{hong2019terminal}.
They showed that most models have at least one parameter such that if there is a bit-flip introduced in its bitwise representation, it will cause an accuracy loss of over 90\%.

Malicious bit-flips were further investigated for various misclassification/model degradation attacks in~\cite{yao2020deephammer,rakin2019bit,rakin2020tbt,bai2021targeted}.

When it comes to fault and error tolerance of neural networks, we would point interested reader to a survey written by Torres-Huitzil and Girau in 2017~\cite{torres2017fault}, which provides exhaustive overview of this topic.

\subsection{Transfer Learning}
\label{sec:transfer}
Transfer learning takes a pre-trained \textit{teacher} model and transfers the knowledge (model architecture and weights) to a \textit{student} model. 
The requirement is to have a similar task for the newly trained student model compared to the teacher model.
Transfer learning is normally achieved by ``freezing'' the first $n-k$ layers of the teacher model out of the total number of $n$ layers -- by fixing the values of the weights.
Then, the remaining $k$ layers are removed and new layers are added to the end of the student model.
These layers are then trained on the new data.
There are 3 main approaches that are used in transfer learning \cite{wang2018great}:
\begin{itemize}
    \item \textit{Deep-layer Feature Extractor:} in this approach, the first $n-1$ layers are frozen and only the last layer is updated, as can be seen in Figure~\ref{fig:transfer}. It is normally used when the student task is very similar to the teacher task. It allows very efficient training.
    In the rest of the paper, we will be focusing on the \textit{secret parameter recovery} of this approach.
    \item \textit{Mid-layer Feature Extractor:} this approach freezes the first $n-k$ layers, where $k < n-1$. It can be used in case the student task is less similar to the teacher task and there is enough data to train the Student.
    \item \textit{Full Model Fine-tuning:} in this approach, all the layers are unfrozen and updated during the student training. It requires sufficient amount of data to fully train the student, and is normally used for cases where student task differs significantly from the teacher task.
\end{itemize}

Important observation when recovering the student model is that the layers copied from the teacher are publicly known, and therefore it is possible to derive the output values for all the frozen layers for any input.
This way, we know the inputs to the layers trained by student, and the outputs from the model.
Based on this information, we are able to design a weight recovery attack assisted by fault injection.

\begin{figure}[tb]
    \centering
    \includegraphics[width=0.48\textwidth]{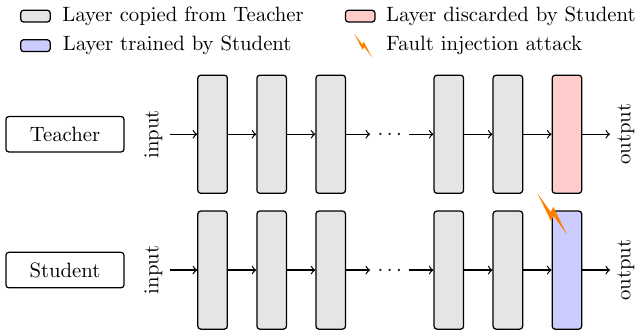}
    \caption{Transfer learning using deep-layer feature extractor and fault injection into the student model for recovering the newly added layer.}
    \label{fig:transfer}
\end{figure}

\subsection{Model Extraction}
\label{sec:related_extraction}
If we consider $\mathcal{O}(\cdot)$ to be the original neural network model we want to extract, $\hat{\mathcal{O}}(\cdot)$ denotes the extracted model.
Jagielski et al.~\cite{jagielski2019high} developed a taxonomy for model extraction attacks and differentiate four different extraction types:
\begin{itemize}
    \item \textit{Exact Extraction:} strongest type of extraction, where $\hat{\mathcal{O}} = \mathcal{O}$, that is, both the architecture and the weights of the extracted model have the same values as the original network.  Exact extraction brings several advantages to the adversary over other extraction types. Firstly, it aides in computing perfect adversarial examples~\cite{shamir2019simple}, which is considered one of the most powerful and stealthy exploits against neural networks. Secondly, the knowledge of exact model also reveals information on training data~\cite{shokri2017membership} which can be highly sensitive and proprietary. It was shown to be \textbf{impossible} to do such extraction for many types of neural networks in black-box \textit{fault-free} scenario, and therefore \cite{jagielski2019high} only focuses on the following three attacks.
    \item \textit{Functionally Equivalent Extraction:} slightly weaker assumption is considered for functionally equivalent extraction, where the attacker is capable of constructing $\hat{\mathcal{O}}$ such that $\forall x \in \mathcal{X}, \hat{\mathcal{O}}(x) = \mathcal{O}(x)$. 
    In such case, it is not necessary to match the two models exactly, only the output of both models has to be the same for all the elements from the domain $\mathcal{X}$ of the dataset $\mathcal{D}$.
    \item \textit{Fidelity Extraction:} for a target distribution $\mathcal{D}_F$ over $\mathcal{X}$, and goal similarity function $S(p_1, p_2)$, fidelity extraction tries to construct $\hat{\mathcal{O}}$ that minimizes Pr$_{x\sim\mathcal{D}_F} \left[S(\hat{\mathcal{O}}(x),\mathcal{O}(x)) \right]$.
    The adversary normally wants to keep both the correct and incorrect classification between the two models.
    A functionally equivalent extraction achieves a fidelity of 1 on all distributions and all distance functions.
    \item \textit{Task Accuracy Extraction:} for a true task distribution $\mathcal{D}_A$ over $\mathcal{X} \times \mathcal{Y}$, task accuracy extraction tries to construct an $\hat{\mathcal{O}}$ that maximizes Pr$_{(x,y)\sim\mathcal{D}_A} \left[\text{arg max}(\hat{\mathcal{O}}(x))=y\right]$. In this setting, the aim is to achieve the same or higher accuracy than the original model. Therefore, it is the easiest type of extraction attack to construct, as it does not care about the original model's mistakes.
\end{itemize}

\section{Methods}
\label{sec:methods}
To be able to reverse engineer a neural network with fault injection attack, we first need to know the erroneous behavior of its elementary components -- neurons.
To study this behavior, we first identify each part of a neuron that can be faulted.

\subsection{Possibilities to Fault a Neuron}
\label{sec:neuron}
Figure~\ref{fig:activation} shows a typical neuron computation in a neural network.
Inputs are multiplied with weights and then summed together, adding a bias.
Resulting value is fed to the activation function, which produces the final output of the neuron.
Below, we identify the points where a fault can be introduced (numbers correspond to those in Figure~\ref{fig:activation}):
\begin{enumerate}
    \item[1.] \textit{Inputs:} there are two possibilities to fault the input -- either at the output of a neuron from the previous layer or at the input of the multiplication of the current neuron. The first case affects the computation of all the neurons in the current layer, while the second case only affects the target neuron.
    \item[2.-3.] \textit{Weights, Product:} unlike faulting the input, weight or product change only affects the target neuron. As we explain later in this paper, attacks on these values can give the attacker knowledge of the weights.
    \item[4.-5.] \textit{Bias, Summation:} attacks on bias can slightly change the input to the activation function, while the attacks on summation can change this greatly. Therefore, the latter one can be considered as one of the means of misclassification by faults.
    \item[6.] \textit{Activation function:} Fault attacks on activation function were studied in~\cite{breier2018practical} from instruction skipping perspective. If attacked with a sufficient precision, they can cause misclassification.
\end{enumerate}

\begin{figure}[tb]
    \centering
    \includegraphics[width=0.49\textwidth]{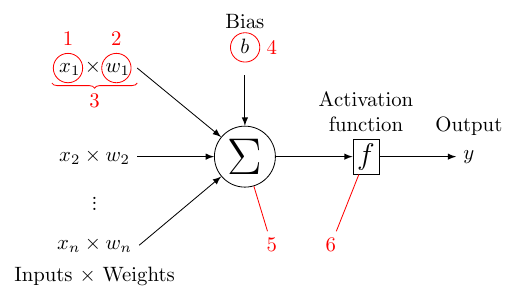}
    \caption{Neuron computation of neural networks.}
    \label{fig:activation}
\end{figure}

\subsection{Experimental Setup}
\label{sec:setup}
In this work, we consider different models which were pretrained using transfer learning~\cite{wang2018great} on ImageNet dataset, following deep-layer feature extractor approach. 
We use several models which are available in public libraries, such as Keras~\cite{chollet2015keras} and PyTorch~\cite{pytorch}, and for the experiments, the last fully connected layers are removed and substituted with single fully connected layer, and retrained. 
For the training data, the visual dataset for object recognition task, CIFAR-10 \cite{Krizhevsky09learningmultiple}, is used. 
The CIFAR-10 dataset contains 50k training data, and 10k test data, each of which is a $32\times 32$ pixels color image. First, the images are upscaled to be consistent with the dimension used in the pretrained model, followed by normalization. 
Next, we add a Dense layer with 10 neurons at the output, corresponding to 10 classes in the dataset.
The activation function used for the output layer is softmax.
Global Average Pooling or Flatten is used before the dense layer to reduce the number of neurons at the output of pretrained networks. 

\subsection{Adversary Model}
We consider an adversary model, where the adversary aims at IP theft for overproduction and illegal cloning of ML proprietary models, running on edge/IoT devices. The proprietary ML models are carefully derived through transfer learning from popular and open ML models like AlexNet~\cite{krizhevsky2012imagenet}, VGG-16~\cite{simonyan2014very}, ResNet-50~\cite{DBLP:conf/cvpr/HeZRS16}, Inception V3~\cite{DBLP:conf/cvpr/SzegedyVISW16}, etc. While the initial layers are publicly known, the adversary aims at recovering the parameters of the re-trained fully connected layers.
To enable model recovery, adversary acquires few legal copies of the target. Being a legal user, the adversary can use the target devices with known data and inject faults into the device. Fault injection is followed by secret parameters recovery. This is a case of IP theft that allows adversary to overproduce/clone the ML model on huge number of devices without paying the legal licence fee.



\subsection{SNIFF -- \underline{S}ig\underline{n} B\underline{i}t \underline{F}lip \underline{F}ault}
\label{sec:sniff}
The attack model for our work is bit flip on the sign bit of the intermediate values.
In particular, we consider attack on two intermediate values: SNIFF on the product of the weight and the input, and SNIFF on the bias value.

SNIFF attack on the product can be achieved in the real device by targeting either the input, the weight, or the final product value (targets 1, 2, and 3 in Figure~\ref{fig:activation}).
In Section~\ref{sec:recovery}, we use the bit flip fault on the weight to model this attack.
In case of SNIFF attack on the bias value, the attacker has to target the bias itself (target 4 in Figure~\ref{fig:activation}).

\subsection{Finding the Correct Timing for Faults}
\label{sec:sca}

Once the target step is identified, one needs to find precise timing locations corresponding to the sensitive computation. As already demonstrated in~\cite{batina2018csi}, it is possible to determine the timing by using side-channel information, coming either from the power consumption of the device or from electromagnetic emanation (EM).

It can be shown in the example of 4 fully connected layer with 50, 30, 20 and 50 neurons in each layer respectively from the input layer, on ARM Cortex-M3 microcontroller mounted on the Arduino Due. The electromagnetic emanation measured through a near field probe 
(RF-U 5-2 H-field probe from Langer) 
is shown in Fig.~\ref{fig:sca}. In Fig~\ref{fig:sca} (a) each layer can be easily identified. Next, Fig~\ref{fig:sca} (b) shows a zoom on computation of the first neuron of the third layer. Given the (50, 30, 20, 50) architecture, 20 multiplications are expected followed by the activation function. Each multiplication can be easily identified in Fig~\ref{fig:sca} (b) and thus precisely targeted with faults.

The process of finding the correct timing can be automated by using pattern recognition techniques to locate the multiplication patterns within the neuron computations.
Similarly, position on the chip which leaks the information in the form of an electromagnetic field can be automatically located.
For example, \cite{danial2020scniffer} shows both processes.

\begin{figure*}
    \centering
    \begin{tabular}{cc}
       \includegraphics[width=0.49\textwidth]{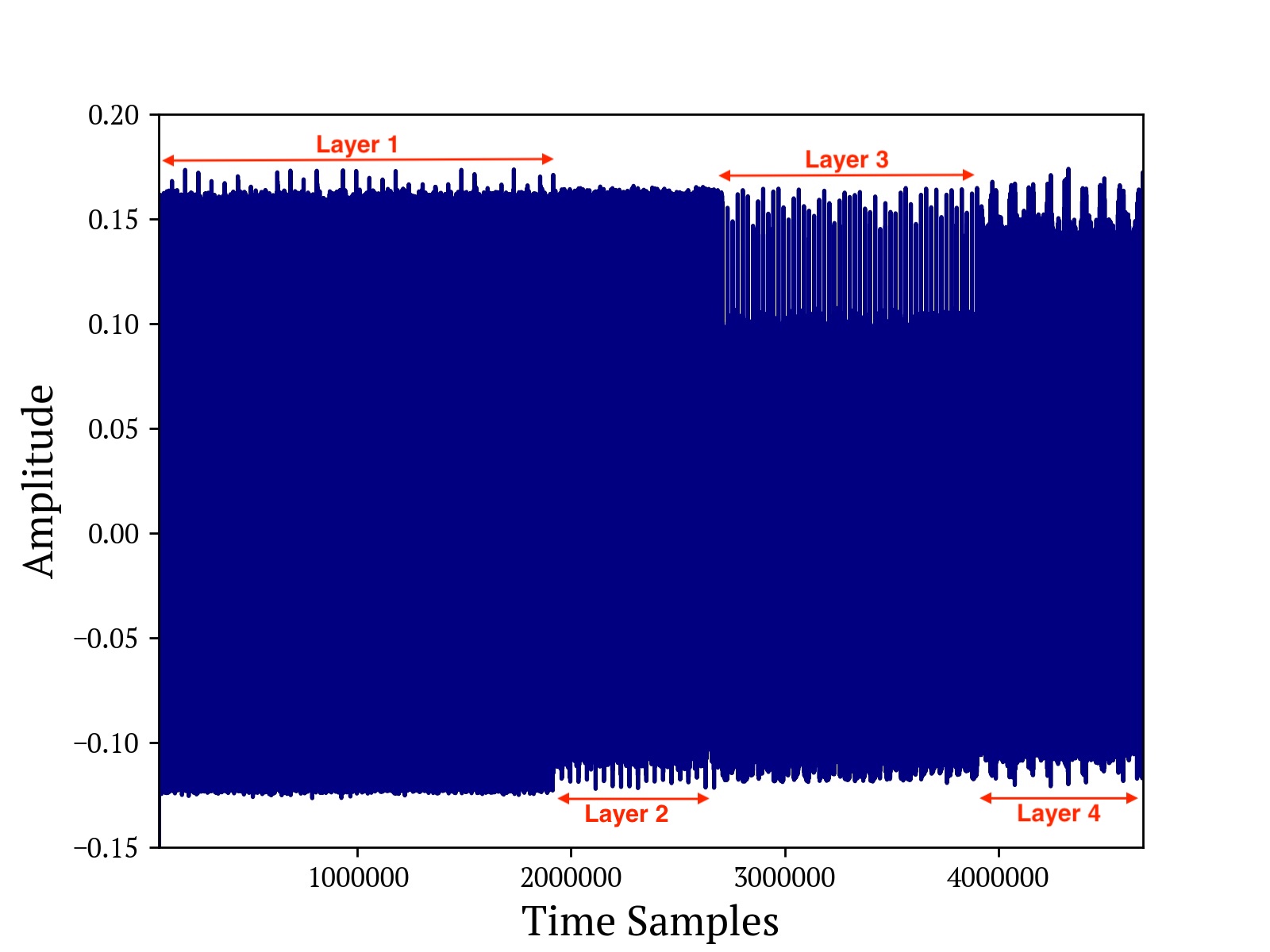}  & \includegraphics[width=0.49\textwidth]{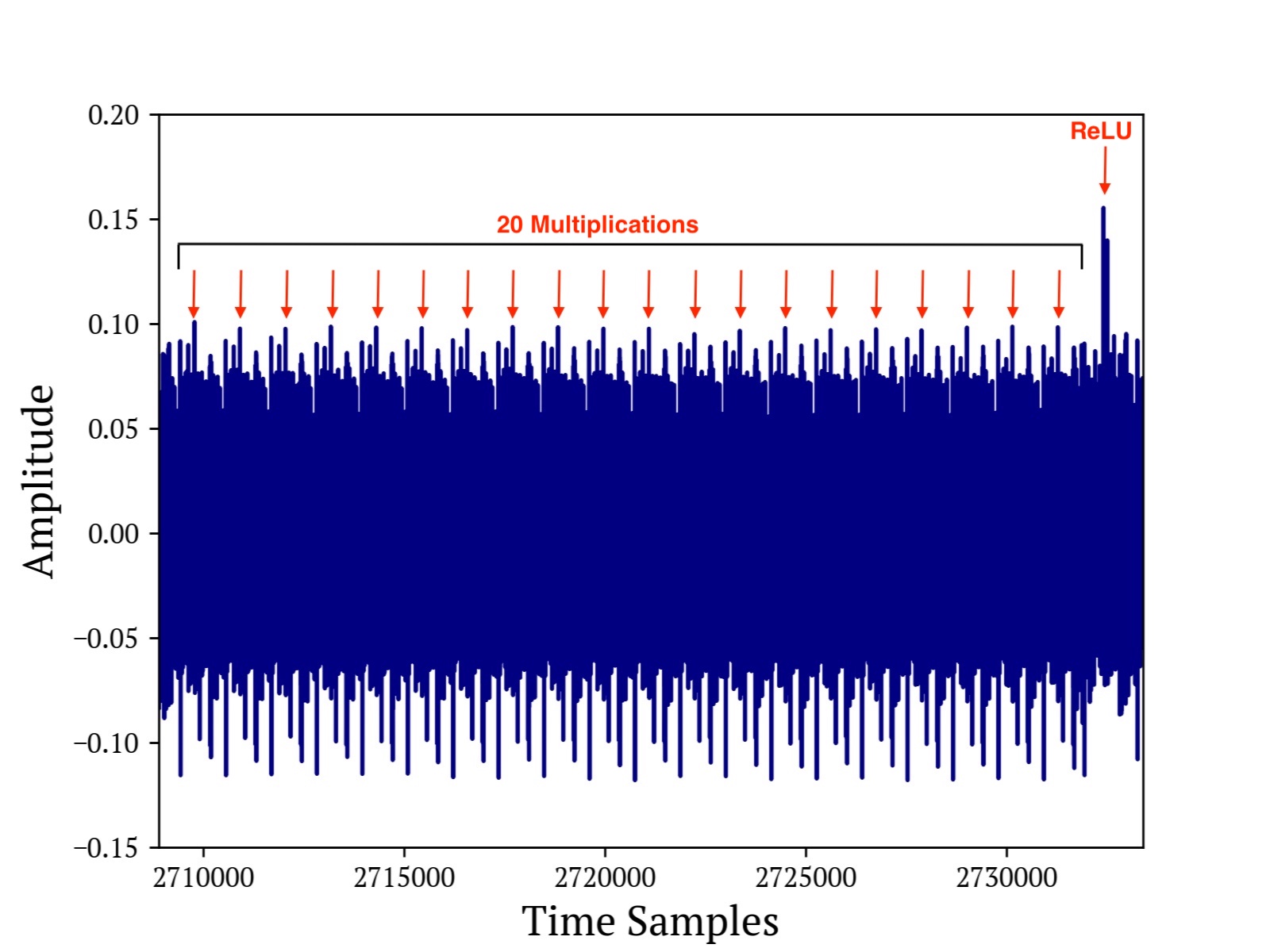} \\
        (a) & (b)\\ 
    \end{tabular}
    \caption{Electromagnetic emanation measurement during the computation of 4 fully connected layers with 50, 30, 20, 50 neurons in each layer. In (a) each layer can be uniquely identified by the measurement trace, while (b) shows execution of one neuron in third layer showing timing of each of the 20 multiplications.}
    \label{fig:sca}
\end{figure*}


\section{Recovery of Secret Parameters}
\label{sec:recovery}
In this section, we will explain the recovery of the weights and biases of the last layer of deep-layer feature extractor model, constructed by using transfer learning.

\subsection{Attack Intuition}
The intuition of the parameter recovery attack is as follows. 
As shown in Figure~\ref{fig:transfer}, the attack works on the last layer of the student network.
The detail of this layer is illustrated in Figure~\ref{fig:last_layers}.
The attacker first executes the model computation on last layer input, denoted by $\boldsymbol{I}=(I_1,I_2,\dots,I_n)$, without fault injection, and observes the outputs -- classes and corresponding probabilities from the last softmax layer.

Then, she injects fault into the last layer by performing SNIFF on a single product of the weight and the input ($I_i\times w_{ij}$).
Based on the original (non-faulty) output values and the faulty ones, she can recover the unknown weight $w_{ij}$.
Similarly, by performing SNIFF on a single bias value ($b_{S,i}$), she can recover this value by comparing the faulty and original network outputs.

\begin{figure}[tb]
    \centering
    \includegraphics[width=0.38\textwidth]{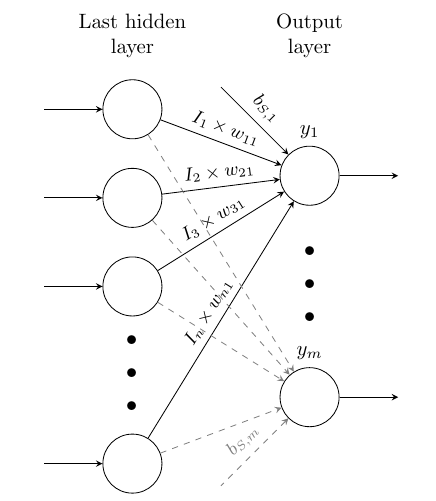}
    \caption{Last two layers of the student model -- nodes $I_i$ are known, while the weights $w_{ij}$ and biases $b_{S,j}$ are the target for the recovery.}
    \label{fig:last_layers}
\end{figure}

\subsection{Formalization}
\label{sec:formalization}
In this section we formally describe the attack.
Suppose there are $k$ layers in the teacher neural net, and for an input $\boldsymbol{x}$, the output is given by $\mathcal{L}_k(\mathcal{L}_{k-1}(\dots \mathcal{L}_1(\boldsymbol{x})))$,
where $\mathcal{L}_i$ denotes the function at layer $i$, which takes the output of the previous layer and gives input for the next layer.
For example, $k=1$ and $\mathcal{L}(\boldsymbol{x})=\text{sigmoid}(\boldsymbol{x}^TW+\boldsymbol{b})$ denotes a fully connected one layer network with weight matrix $W$, bias vector $\boldsymbol{b}$ and activation function sigmoid.

Let $\mathcal{O}_{\theta_{T,-1}}$ denote the part of the teacher neural network that was preserved by the student neural network, i.e.
\[
\mathcal{O}_{\theta_{T,-1}}(\boldsymbol{x}) := \mathcal{L}_{k-1}(\dots \mathcal{L}_1(\boldsymbol{x})).
\]
Here $\theta_{T,-1}$ denotes the parameters of the first $k-1$ layers of the teacher neural network.

Let $W_S$ and $\boldsymbol{b}_S$ denote the trained weight matrix and bias vector for the last layer of student neural network.
Suppose the $(k-1)$th layer of teacher network has $n$ neurons and the output layer of student network has $m$ neurons.
Then we have $W_S$ is an $n\times m$ matrix and $\boldsymbol{b}_S$ is a vector of length $m$. 
For an input $\boldsymbol{x}$, the output of the student neural network is then given by
\[
\mathcal{O}_{\theta}(\boldsymbol{x})=\text{Softmax}(\mathcal{O}_{\theta_{T,-1}}(\boldsymbol{x})^TW_S+\boldsymbol{b}_S),
\]
Let $\boldsymbol{y}(\boldsymbol{x}):=\mathcal{O}_{\theta_{T,-1}}(\boldsymbol{x})^TW_S+\boldsymbol{b}_S$, then we have for $i=1,2,\dots,m$,
\[
\mathcal{O}_{\theta,i}(\boldsymbol{x}) = \frac{\exp{y_i(\boldsymbol{x})}}{\sum_{j=1}^m \exp (y_j(\boldsymbol{x}))}.
\]

By our assumption, the attacker knows the teacher neural network and she can also observe the Softmax output, in particular, she knows the number $m$ and hence the dimensions of $W_S$ and $\boldsymbol{b}_S$.
Our goal of model extraction then consists of recovering $\theta$, the parameters for the student neural network.
Let $\theta_S:=\{W_S,\boldsymbol{b}_S\}$, then $\theta = \theta_S\cup\theta_{T,-1}$.
Note that $\theta_{T,-1}$ are the parameters from the teacher network, which are public information.
Thus our goal is to recover $\theta_S$, or equivalently, $W_S$ and $\boldsymbol{b}_S$.

\begin{definition}
An input $\boldsymbol{x}$ is called a \textit{non-vanishing input} for $i$ ($i=1,2,\dots,n$) if $\mathcal{O}_{\theta_{T,-1},i}(\boldsymbol{x})\neq0$.
\end{definition}

For simplicity, let $\boldsymbol{I}(\boldsymbol{x})$ denote $\mathcal{O}_{\theta_{T,-1}}(\boldsymbol{x})$.
As described in Section~\ref{sec:sniff}, we consider SNIFF on the product $I_iw_{ij}$ and on the bias $b_{S,j}$.

We refer to the unknown weight $w_{ij}$ as the \textit{target weight parameter} and the unknown bias $b_{S,j}$ as the \textit{target bias parameter}.
\begin{thm}\label{thm:bias}
For any $j_0\in\{1,2,\dots,m\}$ and any input $\boldsymbol{x}$.
Suppose a SNIFF on target bias parameter $b_{S,j_0}$ was carried out.
Let $z_{j_0}$ and $\tilde{z}_{j_0}$ denote the correct and faulted value of $\mathcal{O}_{\theta,{j_0}}(\boldsymbol{x})$. 
Then the target weight $b_{S,j_0}$ can be recovered as:
\[
b_{S,j_0} = \frac{1}{2}\ln\left(\frac{\tilde{z}_{j_0}^{-1}-1}{z_{j_0}^{-1}-1}\right).
\]
\end{thm}
\begin{proof}
Let $j_0$ be given and let $\boldsymbol{x}$ be any input.
For simplicity, we write $\boldsymbol{I}$ (resp. $\boldsymbol{y}$) instead of $\boldsymbol{I}(\boldsymbol{x})$ (resp. $\boldsymbol{y}(\boldsymbol{x})$).
For any $j\in\{1,2,\dots,m\}$,
\[
y_{j} = b_{S,j} + \sum_{i=1}^n I_iw_{ij},\ 
z_{j} = \frac{\exp (y_{j})}{\sum_{j'=1}^m \exp (y_{j'})}
\]
In particular,
\[
y_{j_0} = b_{S,j_0} + \sum_{i=1}^n I_iw_{ij_0},\ 
z_{j_0} = \frac{\exp (y_{j_0})}{\sum_{j=1}^m \exp (y_j)}.
\]
Let
\begin{eqnarray*}
A &:=& \sum_{i=1}^n I_iw_{ij_0} = y_{j_0}-b_{S,j_0},\\
B &:=& \sum_{j=1,j\neq j_0}^m\exp (y_j).
\end{eqnarray*}
We have
\begin{eqnarray*}
z_{j_0}&=&\frac{\exp(b_{S,j_0}+A)}{\exp(b_{S,j_0}+A)+B},\\ \tilde{z}_{j_0}&=&\frac{\exp(-b_{S,j_0}+A)}{\exp(-b_{S,j_0}+A)+B}.
\end{eqnarray*}

We note that by definition of Softmax, $z_{i_0}>0$ and $\tilde{z}_{i_0}>0$.
\begin{eqnarray*}
\frac{1}{z_{j_0}}-1&=&\frac{\exp(b_{S,j_0})\exp(A)+B}{\exp(b_{S,j_0})\exp(A)} - 1\\ &=& \frac{B}{\exp(b_{S,j_0})\exp(A)}
= \exp(-b_{S,j_0})\frac{B}{\exp (A)}.
\end{eqnarray*}
Similarly,
\begin{eqnarray*}
\frac{1}{\tilde{z}_{j_0}}-1&=&\frac{\exp(-b_{S,j_0})\exp(A)+B}{\exp(-b_{S,j_0})\exp(A)} - 1\\
&=& \frac{B}{\exp(-b_{S,j_0})\exp(A)} = \exp(b_{S,j_0})\frac{B}{\exp (A)}.
\end{eqnarray*}
By definition of Softmax, $z_{j_0}^{-1}>1$,
\[
\frac{\tilde{z}_{j_0}^{-1}-1}{z_{j_0}^{-1}-1} =  \exp(2b_{S,j_0})\Longrightarrow b_{S,j_0} = \frac{1}{2}\ln\left(\frac{\tilde{z}_{j_0}^{-1}-1}{z_{j_0}^{-1}-1}\right).
\]
\end{proof}
\begin{cor}\label{cor:bias}
The attacker can recover the bias vector $\boldsymbol{b}_S$ with $m$ faults and $2m$ executions of the target neural network (the student neural network). 
\end{cor}

\begin{thm}
For any $i_0\in\{1,2,\dots,n\}, j_0\in\{1,\dots,m\}$ and any $\boldsymbol{x}$, a non-vanishing input for $i_0$.
Suppose a SNIFF on target weight parameter $w_{i_0j_0}$ was carried out.
Let $z_{j_0}$ and $\tilde{z}_{j_0}$ denote the correct and faulted value of $\mathcal{O}_{\theta,{j_0}}(\boldsymbol{x})$. 
Then the target weight $w_{i_0j_0}$ can be recovered as:
\[
w_{i_0j_0} = \frac{1}{2I_{i_0}}\ln\left(\frac{\tilde{z}_{j_0}^{-1}-1}{z_{j_0}^{-1}-1}\right).
\]
\end{thm}
\begin{proof}
Let $i_0,j_0$ be given, and let $\boldsymbol{x}$ be a non-vanishing input for $i_0$.
For simplicity, we write $\boldsymbol{I}$ (resp. $\boldsymbol{y}$) instead of $\boldsymbol{I}(\boldsymbol{x})$ (resp. $\boldsymbol{y}(\boldsymbol{x})$).
We let $w_{ij}$ denote the $(i,j)$th entry of the weight matrix $W_S$.
And let $b_{S,j}$ denote the $j$th entry of the bias vector $\boldsymbol{b}_S$.
Then for any $j\in\{1,2,\dots,m\}$,
\[
y_{j} = b_{S,j} + \sum_{i=1}^n I_iw_{ij},\ 
z_{j} = \frac{\exp (y_{j})}{\sum_{j'=1}^m \exp (y_{j'})}
\]
In particular,
\[
y_{j_0} = b_{S,j_0} + \sum_{i=1}^n I_iw_{ij_0},\ 
z_{j_0} = \frac{\exp (y_{j_0})}{\sum_{j=1}^m \exp (y_j)}.
\]
Let
\begin{eqnarray*}
A &:=& b_{S,j_0} + \sum_{i=1, i\neq i_0}^n I_iw_{ij} = y_{j_0}-I_{i_0}w_{i_0j_0},\\
B&:=&\sum_{j=1,j\neq j_0}^m\exp (y_j).
\end{eqnarray*}

We have
\begin{eqnarray*}
z_{j_0}&=&\frac{\exp(I_{i_0}w_{i_0j_0}+A)}{\exp(I_{i_0}w_{i_0j_0}+A)+B},\\
\tilde{z}_{j_0}&=&\frac{\exp(-I_{i_0}w_{i_0j_0}+A)}{\exp(-I_{i_0}w_{i_0j_0}+A)+B}.
\end{eqnarray*}
We note that by definition of Softmax, $z_{j_0}>0$ and $\tilde{z}_{j_0}>0$.
{\small
\begin{eqnarray*}
\frac{1}{z_{j_0}}-1&=&\frac{\exp(I_{i_0}w_{i_0j_0})\exp(A)+B}{\exp(I_{i_0}w_{i_0j_0})\exp(A)} - 1\\
&=& \frac{B}{\exp(I_{i_0}w_{i_0j_0})\exp(A)} = \exp(-I_{i_0}w_{i_0j_0})\frac{B}{\exp(A)}.
\end{eqnarray*}}
Similarly,
{\small
\begin{eqnarray*}
\frac{1}{\tilde{z}_{j_0}}-1&=&\frac{\exp(-I_{i_0}w_{i_0j_0})\exp(A)+B}{\exp(-I_{i_0}w_{i_0j_0})\exp(A)} - 1 \\
&=& \frac{B}{\exp(-I_{i_0}w_{i_0j_0})\exp(A)} = \exp(I_{i_0}w_{i_0j_0})\frac{B}{\exp(A)}.
\end{eqnarray*}}
Since $\boldsymbol{x}$ is a non-vanishing input for $i_0$, we have $I_{i_0}\neq 0$.
Also by definition of Softmax, $z_{i_0}^{-1}>1$.
Together with the above equations,
\[
\frac{\tilde{z}_{j_0}^{-1}-1}{z_{j_0}^{-1}-1} =  \exp(2I_{i_0}w_{i_0j_0})\Longrightarrow w_{i_0j_0} = \frac{1}{2I_{i_0}}\ln\left(\frac{\tilde{z}_{j_0}^{-1}-1}{z_{j_0}^{-1}-1}\right).
\]
\end{proof}
Thus the attacker can recover an $i_0,j_0$ entry of the weight matrix $W_S$, by first running an offline phase to find a non-vanishing input $\boldsymbol{x}$ for $i_0$, then with two executions of the student neural network - one without fault and one with fault.
\begin{cor}\label{cor:weight}
The attacker can recover the weight matrix $W_S$ with $mn$ faults and $2mn$ executions of the targeted neural network (the student neural network). 
\end{cor}

In practice, during the inference, Softmax might be omitted to save the computation time.
We remark that in this case, Corollaries~\ref{cor:bias} and \ref{cor:weight} still hold and the computations needed will be even easier.
Keeping notations in Theorem~\ref{thm:bias} and the proof, we have
\[
z_{j_0} = y_{j_0} = b_{S,j_0} + A, \tilde{z}_{j_0} = \tilde{y}_{j_0} = -b_{S,j_0} + A,
\]
then, the target bias can be recovered using
\[
b_{S,j_0} = \frac{1}{2}(z_{j_0}-\tilde{z}_{j_0}).
\]
The target weight can be recovered in a similar manner.

\section{Results and Discussion}
\label{sec:results}
Bit-flip attacks have been shown to be practical on embedded devices~\cite{agoyan2010flip}.
Similar results can be obtained by using a Rowhammer in DRAM memories~\cite{seaborn2015exploiting}.
In this part we first simulate the bit-flip attack in the code and then use the formulas from the previous section to reverse engineer the model parameters.
Then, we compare our results to previous works.
Finally, we discuss selection of the model extraction method based on the attack purpose.

\subsection{Experimental Results}
Experimental results for reverse engineering with bit-flips are stated in Table~\ref{tab:results}.
We targeted deep-layer feature extractor networks that were based on publicly available networks, being able to reverse engineer the weights in the last layer.
When it comes to recovery of weights, the weight precision for all except 3 networks was $10^{-14}$, for the remaining cases it was $10^{-13}$.
In case of bias recovery, the precision was always $10^{-14}$.

We would like to highlight that the method from Section~\ref{sec:methods} allows the recovery of the exact weight value if we have arbitrary precision of floating point numbers.
In practice, this depends on the used library, computer architecture, and settings.
For our experiments we used Python with Keras library (version 2.3.1) for deep learning.
This library uses numpy for floating point number representation, offering different precision ranging from 16 to 64 bits\footnote{Numpy supports up to 128-bit floats, but those are not compatible with Keras.}.
In our setting we set the \texttt{float64} to be the default representation to get the most precise results.

\begin{table*}[tb]
    \centering
    \caption{Experimental results for reverse engineering with faults. We targeted deep-layer feature extractor networks based on publicly available networks for image classification.}
    \label{tab:results}
    \begin{tabular}{cccc}\hline
    \multicolumn{2}{c}{} & \multicolumn{2}{c}{Reverse Engineering} \\\hline
        Model & No. of Features To Recover & Weight Precision & Bias Precision\\\hline
         \hline
        AlexNet~\cite{krizhevsky2012imagenet} & 9216 & $10^{-13}$ & $10^{-14}$ \\\hline
        GoogleNet (Inception V1)~\cite{szegedy2015going} & 1024 & $10^{-14}$  & $10^{-14}$ \\\hline
        VGG-16~\cite{simonyan2014very} & 25088 & $10^{-13}$  & $10^{-14}$ \\\hline
        ResNet-50~\cite{DBLP:conf/cvpr/HeZRS16} & 2048 & $10^{-14}$  &  $10^{-14}$\\\hline
        Inception V3~\cite{DBLP:conf/cvpr/SzegedyVISW16} & 2048 & $10^{-13}$  & $10^{-14}$ \\\hline
        Inception ResNet V2~\cite{DBLP:conf/aaai/SzegedyIVA17} & 1536 & $10^{-14}$  & $10^{-14}$ \\\hline
        Wide-ResNet-50-2~\cite{Zagoruyko2016WRN} & 2048 & $10^{-14}$  &  $10^{-14}$\\\hline
        DenseNet-201~\cite{DBLP:conf/cvpr/HuangLMW17} & 1920 & $10^{-14}$  & $10^{-14}$ \\\hline
        Xception~\cite{DBLP:conf/cvpr/Chollet17} & 2048 &  $10^{-14}$ & $10^{-14}$ \\\hline
        ResNeXt-101 32x8d~\cite{Xie2016} & 2048 &  $10^{-14}$ & $10^{-14}$ \\\hline
        NasNet-A (6 @ 4032)~\cite{DBLP:conf/cvpr/ZophVSL18} & 4032 &  $10^{-14}$ & $10^{-14}$ \\\hline
    \end{tabular}
\end{table*}


\subsection{Comparison to Prior Work}

The seminal work of Lowd and Meek~\cite{lowd2005adversarial} enabled full model functionally equivalent extraction for linear models. Further, full model functionally equivalent extraction for a 2-layer non-linear neural network was proposed by Milli et al~\cite{milli2019model} in a theoretical setting. When considering extraction of fully implemented neural networks, only two works have come to light.
Batina et al.~\cite{batina2018csi} relied on side-channel leakage on electromagnetic measurements to extract the functionally equivalent model in a known input setting. They reported an error on recovered weight of $2.5\times 10^{-3}$, and full network recovery. Later, Jagielski et al.~\cite{jagielski2019high} proposed two attacks. One of the two attacks enabled full model functionally equivalent extraction for a 2-layer neural network with a weight error of only $9\times 10^{-7}$, which is current state-of-the-art. This method required access to logit values, which is a stronger assumption compared to outputs of the softmax function used in our approach.
The other method they developed enabled full model extraction preserving task accuracy and fidelity.

Compared to these prior works, the goal of our work is exact extraction.
When experimentally testing our method with Keras and Pytorch, the recovered weight error of our fault assisted approach was at most $10^{-13}$. \emph{It must be noted that the stated error is the precision error of the Python libraries used in our experiments. Otherwise, our proposed method can provably recover the exact weights.}
The comparison is summarized in Table~\ref{tab:related}.

\begin{table*}[tb]
    \centering
    \caption{Comparison With Prior Work targeting direct model extraction. $^*$ denotes that technique has null precision error. In our experiments the error reported was at most $10^{-13}$, which is the precision limitation of the used Python libraries.}
    \label{tab:related}
    \begin{tabular}{ccccc}
    \hline
    Attack & Leakage Source & Weight Error & Target Network & Goal\\\hline\hline
    \cite{lowd2005adversarial} & Labels & N/A & Linear models & Functionally equivalent\\\hline
    \cite{milli2019model} & Gradients/logits & N/A & 2-layer neural network & Functionally equivalent\\\hline
    \cite{batina2018csi} & EM Side-Channel & $2.5\times 10^{-3}$ &  Full network & Functionally equivalent\\\hline  
    \cite{jagielski2019high} &  Probabilities/logits & $9\times 10^{-7}$ & 2-layer neural network & Functionally equivalent\\\hline\hline
    This Work & Faults/Probabilities & $0$ $(10^{-13})^*$ & 2-layer neural network & Exact extraction\\\hline
    \end{tabular}
\end{table*}

\begin{figure}
    \centering
    \includegraphics[width=0.8\linewidth]{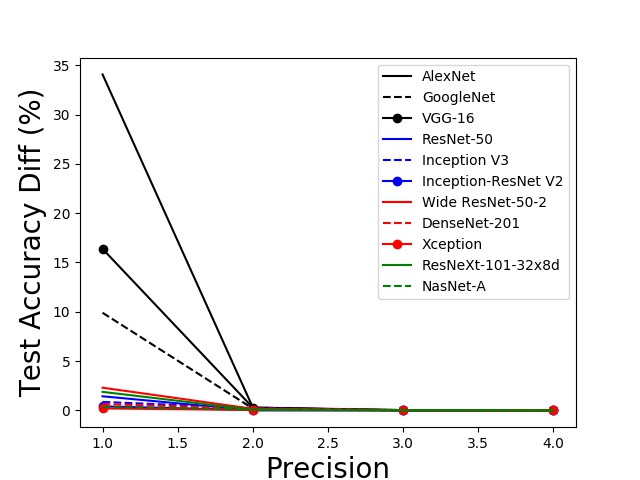}
    \caption{Functionally equivalent model extraction: The difference in test accuracy between the actual model and recovered model against the parameter precision up to certain floating point digit. If the parameter values are the same up to the second decimal point, the test accuracy of the recovered model is the same as the original one for all the evaluated networks.}
    \label{fig:prec_vs_accdiff}
\end{figure}

\subsection{Selecting the Model Extraction Method}
It is important to understand the purpose of the model extraction attack -- after that, it is possible to determine what type of attack should the attacker choose, ultimately deciding the difficulty of the extraction.

If the main goal is to have a task accurate extraction or functionally equivalent extraction, the attacker can achieve this by querying the network with a set of inputs and observing the outputs~\cite{jagielski2019high, milli2019model}.
In this case, the extracted network might have a different architecture than the original one, but will perform well on the same or similar task.
As can be seen in Figure~\ref{fig:prec_vs_accdiff}, for functionally equivalent extraction, it is enough to be able to recover the parameters with the precision of two floating point digits for all the considered networks.
However, if the task changes, the extracted network might give different output than the original one, as it was not trained the same way.
For example, some attackers might be interested in robustness of a certain network to a set adversarial examples, but are not able to query the original network with the entire set.
In such case, task accurate extraction will not help as it will not reveal the vulnerability of the original network by testing the extracted network.
As the adversarial examples are often very close to decision boundaries~\cite{shamir2019simple}, precision of the parameters is crucial to assess the vulnerability.
For such scenarios, it is necessary to have extracted network that is as close to the original network as possible.
That is a task of exact extraction.


\section{Protection Techniques}
\label{sec:protection}
In this section we will outline different techniques that can help protect neural network implementations against fault injection attacks.

\subsection{Overview}
In general, the protection techniques against fault injection can work either on device level, or implementation level.

\textit{Device level techniques} focus on preventing the attacker to reach the chip, by various forms of packaging, light sensors, etc.~\cite{bar2006sorcerer}.
The goal is to increase the equipment and expertise requirement to access the chip in a way that the possible reward for the attacker for doing so will be lower than the effort she has to put in.
Device level techniques can also have a different working principle -- to detect potential tampering with the chip.
In this case, a hardware sensor that checks environmental conditions can be deployed~\cite{he2016ring,zussa2014efficiency,ravi2018ppap}.

\textit{Implementation level techniques} aim at detecting changes in the intermediate data.
Detection can be achieved by using various encoding techniques, ranging from simple ones such as parity~\cite{karri2003parity}, to sophisticated codes that can be customized to protect against specific fault models~\cite{breier2019evaluating}.
Another approach is performing the computation several times and comparing the result.
A different way to use redundancy is to perform it at the instruction level, either by generating instruction sequences that replace the original vulnerable instructions~\cite{patranabis2017fault}, or by re-arranging the data within the instructions to make it hard to tamper with without detection~\cite{patrick2016lightweight}.
However, there is no straightforward way of using these two techniques for protecting DNNs.
It is important to mention that unlike device level techniques, the implementation level countermeasures normally incur significant overheads, either in time, circuit area, or power consumption.

\textit{Protecting the learning phase.} Additionally, there is a line of work that focuses on protecting the learning phase of the deep learning method~\cite{taniguchi1999activation}. Such protection technique might be useful in case the learning does not happen in a protected environment and there is a significant risk of faults coming either from the environment or from the attacker.
In our work we consider the model is already learned and therefore, the attacker is trying to tamper with the classification phase.

\subsection{Analysis}
\label{sec:protection_analysis}
Analysis of overheads and coverage of each countermeasure that can be used against instruction skips presented in earlier sections is stated in Table~\ref{tab:countermeasures}.
Here, we provide more details on each technique and its applicability to DNN.

\noindent
\textbf{Spatial/temporal redundancy.} This is the most straightforward way to protect a circuit. Implementer can choose the number of redundant executions depending on what attacker model is expected. In case of redundancy, there is always an integrity check or a majority voting that decides whether the output is valid or not.
When used as a countermeasure in cryptography, circuit is either deployed 2-3$\times$ on the chip (spatial redundancy), or the computation is repeated 2-3$\times$ one after another (temporal redundancy)~\cite{barenghi2010countermeasures}.
Execution times can be randomized so that it is hard to reproduce the same fault in all the redundant executions.

\begin{table*}[tb]
\begin{center}
\caption{Overview of countermeasures effective against skipping instructions.}
\label{tab:countermeasures}
\begin{tabular}{cccc}\hline
    \multicolumn{1}{c}{} & \multicolumn{2}{c}{Overhead} & \multicolumn{1}{c}{} \\ \hline
    Countermeasure & Time & Area & Coverage \\
    \hline\hline
    Spatial redundancy ($\times N$) & -- & $N\times 100\%$ & \multicolumn{1}{m{9cm}}{Covers up to $N-1$ faults. To break the countermeasure, faults need to be injected at the same instruction in all the redundant circuits -- which normally requires multiple fault injection devices.}\\ \hline
    Temporal redundancy ($\times N$) & $N\times 100\%$ & -- & \multicolumn{1}{m{9cm}}{Covers up to $N-1$ faults. To break the countermeasure, faults need to be injected at the same instruction in all the redundant executions.}\\ \hline
    Software encoding~\cite{breier2019evaluating} & 75\% & $\approx 65,000\%$ & \multicolumn{1}{m{9cm}}{Protects against instruction skips that target one instruction at a time. Although it does not protect against consecutive instruction skips, during one execution it can protect arbitrary number of non-consecutive skips with 100\% detection rate.} \\ \hline
    Hardware sensor~\cite{khairallah2019differential} & -- & 1.1\%\footnotemark & \multicolumn{1}{m{9cm}}{As the sensor is based on detecting voltage variations on the chip surface, the detection rate depends on the fault injection device parameters. The most recent work shows high detection rates for both laser and EM fault injection techniques, 97\% and 100\% detected injections, respectively.}\\ \hline
\end{tabular}
\end{center}
\end{table*}

\noindent
\textbf{Software encoding.} As the software encoding countermeasures are realized by table look-up operations, they are not directly applicable to neural networks which operate on real values. However, it is possible to apply this countermeasure for fixed-point arithmetic networks~\cite{hwang2014fixed}.
As it was shown, fixed-point arithmetic can provide good results when used on bigger networks~\cite{sung2015resiliency}. 
The timing overhead in this case is around 75\% -- for example, let us consider a multiplication operation on AVR architecture: for the unprotected implementation, there is operand loading into the registers ($2\times 1$ clk cycle), followed by a multiplication ($2$ clk cycles), resulting into $4$ clock cycles. 
For the protected implementation, there is a register precharge (see e.g. Section 5.1 of~\cite{breier2019evaluating}) of both input registers and the output register ($3\times 1$ clk cycle), followed by the operand loading ($2\times 1$ clk cycle) and table look-up ($2$ clk cycles), resulting into $7$ clock cycles.
Regarding the area overhead, as stated in~\cite{breier2019evaluating}, in case the codeword size is $\leq 8$ bits, there is a fixed table size of $65$ kB per binary operation (e.g. multiplication).
That is why the area (memory) overhead is huge for this case.

\noindent
\textbf{Hardware sensor.} Application of a hardware sensor to protect DNN circuit is depicted in Figure~\ref{fig:sensor}. The main advantage of hardware sensor is that there is no need to change the underlying implementation of the neural network.
The sensor resides on the front side of the chip, protecting all the underlying circuits from fault injection.
In case there is a sudden parasitic voltage detected by such sensor, it raises an alarm.
While front side deployment might be vulnerable to back (substrate) side injection, \cite{he2016ring} reported successful detection of backside injection. Recently, circuit level techniques were also proposed to enhance backside detection capabilities~\cite{matsuda2018286}.
Afterwards, security measures, such as discarding the output, can be applied.
Recently, a way to automate the deployment of such circuit was proposed~\cite{automated_book}.

\begin{figure}[tb]
    \centering
    \includegraphics[width=0.49\textwidth]{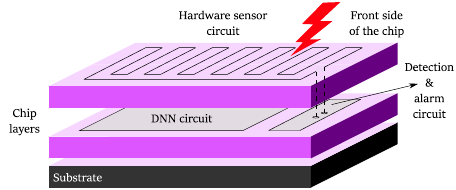}
    \caption{Hardware sensor protecting the DNN circuit.}
    \label{fig:sensor}
\end{figure}

To summarize, selection of countermeasures depends heavily on the type of application that relies on DNN outputs.
For security critical application, it would be recommended to combine several techniques together to minimize the possible attack vectors and make cost of the attack as high as possible.

\section{Conclusion}
\label{sec:conclusion}
In this paper, we developed a method for provable exact extraction of neural network parameters with the help of fault injection.
Our method aims at recovering the student layer of deep-layer feature extractor networks that were constructed by transfer learning.
This is done by changing the sign of intermediate values to obtain the information about the parameters with a method called SNIFF -- \underline{s}ig\underline{n} b\underline{i}t \underline{f}lip \underline{f}ault.
Our practical experiments show that the exact recovery ultimately depends on computer architecture and the precision of the library used. 
For 64-bit floats used in Keras, the parameter recovery error was at most $10^{-13}$.

For the future work, it would be interesting to look at methods that would allow extraction of parameters from deeper layers of a network.
It would be also worth exploring whether a combination of multiple faults during a single execution can improve the efficiency of the attack.

\bibliographystyle{IEEEtran}
\bibliography{references}

\footnotetext{Sensor requires power during the operation, therefore there is a power overhead of $\approx 5.3\%$ per 16-bit multiplier.}

\begin{IEEEbiography}[{\includegraphics[width=1in,height=1.1in,clip,keepaspectratio]{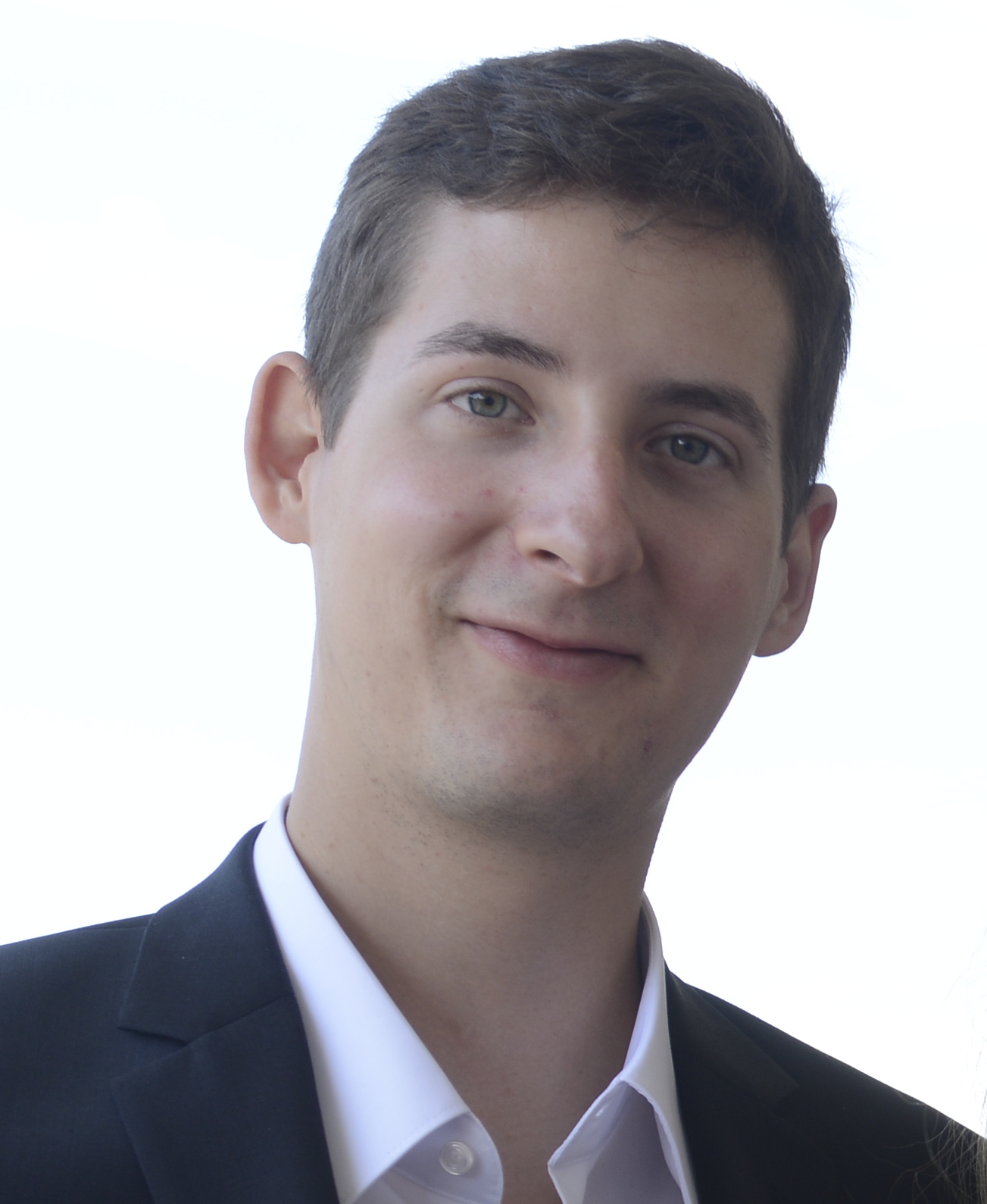}}]%
{Jakub Breier}
is currently a Senior Researcher in Embedded Security at Silicon Austria Labs, Graz, Austria.
Before that, he worked at Nanyang Technological University, Singapore on hardware security and at Underwriters Laboratories, Singapore on security evaluation of embedded devices.
He received his PhD in Applied Informatics from Slovak University of Technology (STU), Slovakia in 2013, Master's in Information Technology Security from Masaryk University, Czech Republic in 2010, and Bachelor's in Informatics from STU, Slovakia in 2008. 
His research topics include fault and side-channel analysis methods and countermeasures, advanced fault injection techniques, and deep learning security.
 \end{IEEEbiography}


\begin{IEEEbiography}[{\includegraphics[width=1in,height=1.1in,clip,keepaspectratio]{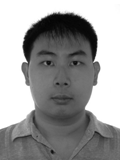}}]%
{Dirmanto Jap}
is currently a Research Scientist at PACE Lab, Temasek Laboratories, Nanyang Technological University (NTU), Singapore. He previously received his Ph.D in Mathematics from NTU in 2016. His main research topics include physical attacks (side-channel and fault attacks) and countermeasures, practical laser/EM fault injection, application of machine learning and deep learning for side-channel attacks and hardware Trojan detection, as well as security of deep learning.
\end{IEEEbiography}


\begin{IEEEbiography}[{\includegraphics[width=1in,height=1.1in,clip,keepaspectratio]{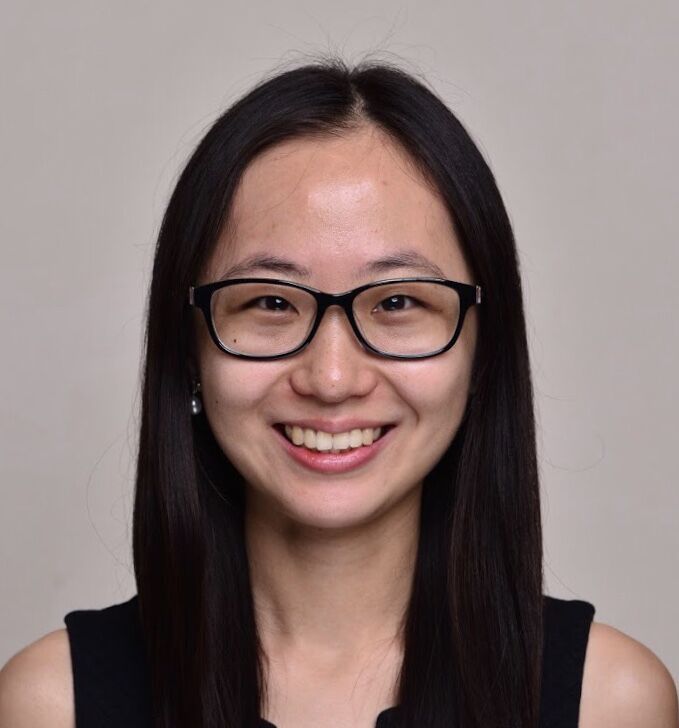}}]%
{Xiaolu Hou} is currently an Assistant Professor at Faculty of Informatics and Information Technologies, Slovak University of Technology, Slovakia.
She received her Ph.D. degree in mathematics from Nanyang Technological University (NTU), Singapore, in 2017.
Her research focus is on fault injection and side-channel attacks. She also has research experience in security of neural networks, location privacy, multiparty computation and differential privacy.
With a wide range of research interests, she has published her work at top venues within various fields, ranging from mathematics to computer security.
\end{IEEEbiography}


\begin{IEEEbiography}[{\includegraphics[width=1in,height=1.1in,clip,keepaspectratio]{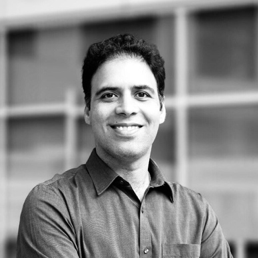}}]%
{Shivam Bhasin} is a Senior Research Scientist and Programme Manager (Cryptographic Engineering) at Centre for Hardware Assurance, Temasek Laboratories, Nanyang Technological University Singapore. He received his PhD in Electronics \& Communication from Telecom Paristech in 2011, Advanced Master (Mastère Spécialisé) in security of integrated systems \& applications from Mines Saint-Etienne, France in 2008. Before NTU, Shivam held position of Research Engineer in Institut Mines-Telecom, France. He was also a visiting researcher at UCL, Belgium (2011) and Kobe University (2013). His research interests include embedded security, trusted computing and secure designs. He has co-authored several publications at recognized journals and conferences. Some of his research now also forms a part of ISO/IEC 17825 standard.
\end{IEEEbiography}


\begin{IEEEbiography}[{\includegraphics[width=1in,height=1.1in,clip,keepaspectratio]{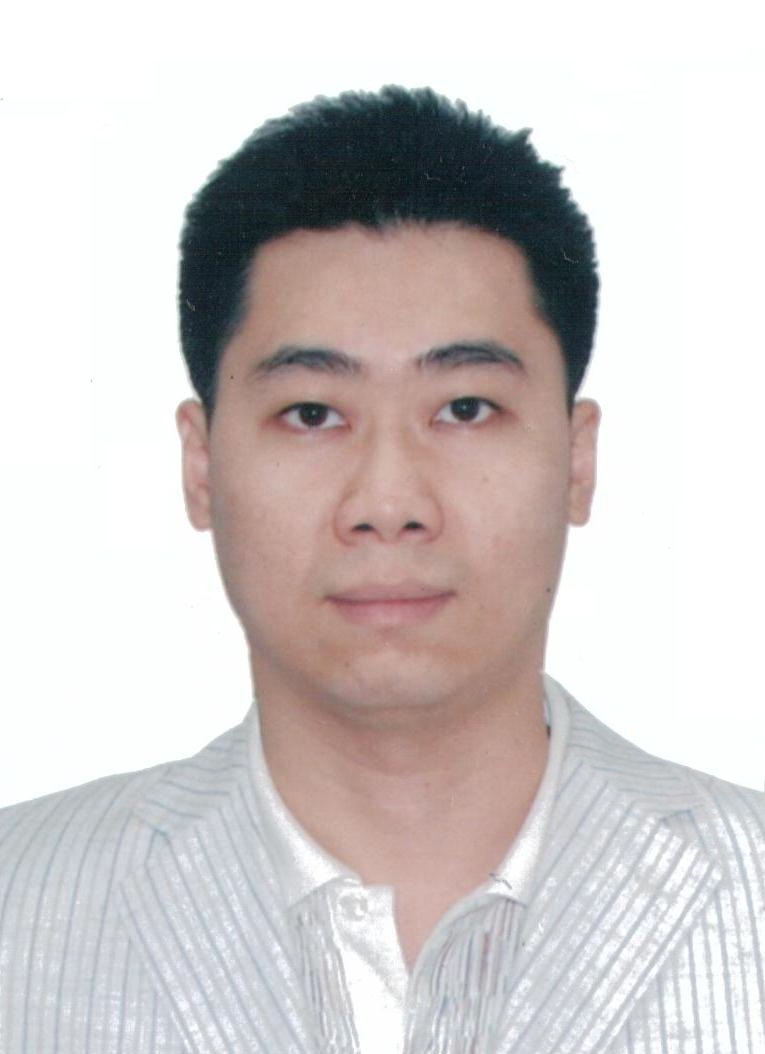}}]%
	{Yang Liu} graduated in 2005 with a Bachelor of Computing (Honours) in the National University of Singapore (NUS). In 2010, he obtained his PhD and started his post doctoral work in NUS, MIT and SUTD. In 2012 fall, he joined Nanyang Technological University (NTU) as a Nanyang Assistant Professor. He is currently an associate professor and Director of the cybersecurity lab in NTU. Dr. Liu specializes in software verification, security and software engineering. His research has bridged the gap between the theory and practical usage of formal methods and program analysis to evaluate the design and implementation of software for high assurance and security. By now, he has more than 200 publications in top tier conferences and journals. He has received a number of prestigious awards including MSRA Fellowship, TRF Fellowship, Nanyang Assistant Professor, Tan Chin Tuan Fellowship, and 8 best paper awards in top conferences like ASE, FSE and ICSE. He is leading a large research team working on the state-of-the-art software engineering and cybersecurity problems.
\end{IEEEbiography}
\appendix
\subsection{Other Activation Functions}
In this section we consider the case when the activation function of the output layer is not softmax.
Following notations from Section~\ref{sec:formalization}, suppose there are $k$ layers in the teacher neural net, and let $\mathcal{O}_{\theta_{T,-1}}$ denote the part of the teacher neural network that was preserved by the student neural network.
Let $W_S$ and $\boldsymbol{b}_S$ denote the trained weight matrix and bias vector for the last layer of student neural network.
Suppose the $(k-1)$th layer of teacher network has $n$ neurons and the output layer of student network has $m$ neurons.

Our attack goal is to recover $W_S$ and $\boldsymbol{b}_S$.
We refer to the unknown weight $w_{ij}$ as the \textit{target weight parameter} and the unknown bias $b_{S,j}$ as the \textit{target bias parameter}.

For an input $\boldsymbol{x}$, let $\boldsymbol{I}(\boldsymbol{x})$ denote $\mathcal{O}_{\theta_{T,-1}}(\boldsymbol{x})$.
Let $\boldsymbol{y}(\boldsymbol{x}):=\mathcal{O}_{\theta_{T,-1}}(\boldsymbol{x})^TW_S+\boldsymbol{b}_S$.
For simplicity, we write $\boldsymbol{I}$ (resp. $\boldsymbol{y}$) instead of $\boldsymbol{I}(\boldsymbol{x})$ (resp. $\boldsymbol{y}(\boldsymbol{x})$).
As described in Section~\ref{sec:sniff}, we consider SNIFF on the product $I_iw_{ij}$ and on the bias $b_{S,j}$.

\subsubsection{Sigmoid}

In case the activation function for the last layer is sigmoid, for an input $\boldsymbol{x}$, the output of the student neural network is given by
\[
\mathcal{O}_{\theta}(\boldsymbol{x})=\text{sigmoid}(\mathcal{O}_{\theta_{T,-1}}(\boldsymbol{x})^TW_S+\boldsymbol{b}_S),
\]
for $i=1,2,\dots,m$,
\[
\mathcal{O}_{\theta,i}(\boldsymbol{x}) = \frac{1}{1+\exp (-y_i(\boldsymbol{x}))}.
\]

\begin{thm}
For any $j_0\in\{1,2,\dots,m\}$ and any input $\boldsymbol{x}$.
Suppose a SNIFF on target bias parameter $b_{S,j_0}$ was carried out.
Let $z_{j_0}$ and $\tilde{z}_{j_0}$ denote the correct and faulted value of $\mathcal{O}_{\theta,{j_0}}(\boldsymbol{x})$. 
Then the target weight $b_{S,j_0}$ can be recovered as:
\[
b_{S,j_0} = \frac{1}{2}\ln\left(\frac{\tilde{z}_{j_0}^{-1}-1}{z_{j_0}^{-1}-1}\right).
\]
\end{thm}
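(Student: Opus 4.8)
The plan is to write both softmax outputs in closed form and exploit that a SNIFF on the bias negates only the single coordinate $b_{S,j_0}$ of the pre-activation vector, leaving everything else fixed. Recall that $y_{j_0}(\boldsymbol{x}) = \sum_{i=1}^n I_i(\boldsymbol{x})\,w_{ij_0} + b_{S,j_0}$, so flipping the sign of the bias sends $y_{j_0}$ to $\tilde{y}_{j_0} = y_{j_0}(\boldsymbol{x}) - 2b_{S,j_0}$, while every $y_j(\boldsymbol{x})$ with $j\neq j_0$ is untouched. Abbreviating the partition sum over the untargeted coordinates by $S := \sum_{j\neq j_0}\exp y_j(\boldsymbol{x})$, I would record
\[
z_{j_0} = \frac{\exp y_{j_0}(\boldsymbol{x})}{\exp y_{j_0}(\boldsymbol{x}) + S}, \qquad \tilde{z}_{j_0} = \frac{\exp \tilde{y}_{j_0}}{\exp \tilde{y}_{j_0} + S}.
\]

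The key simplification is to pass from the softmax values to the quantity $z_{j_0}^{-1}-1$ rather than working with $z_{j_0}$ directly. A one-line computation gives $z_{j_0}^{-1}-1 = S\exp(-y_{j_0}(\boldsymbol{x}))$ and, using $\tilde{y}_{j_0} = y_{j_0}(\boldsymbol{x}) - 2b_{S,j_0}$, also $\tilde{z}_{j_0}^{-1}-1 = S\exp(-y_{j_0}(\boldsymbol{x}))\exp(2b_{S,j_0})$. This transformation strips away the target exponential from the normalization and exposes the unknown sum $S$ as a clean multiplicative factor shared by both expressions.

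Finally I would take the ratio of the two: the unknown partition sum $S$ and the unknown factor $\exp(-y_{j_0}(\boldsymbol{x}))$ cancel identically, leaving
\[
\frac{\tilde{z}_{j_0}^{-1}-1}{z_{j_0}^{-1}-1} = \exp(2b_{S,j_0}).
\]
Taking natural logarithms and dividing by two then yields the claimed formula for $b_{S,j_0}$.

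The main obstacle here is conceptual rather than computational. Because the softmax couples $y_{j_0}$ to every other logit through its normalization, neither $z_{j_0}$ nor $\tilde{z}_{j_0}$ on its own determines $b_{S,j_0}$ without knowledge of $S$, which depends on all the still-unknown weights of the other output neurons. The crucial observation that makes the recovery exact is that the fault alters only coordinate $j_0$, so $S$ is literally the same in the faulted and fault-free runs; taking a ratio therefore eliminates it, and the $z^{-1}-1$ form is exactly what is needed to present $S$ as a single cancellable factor. I would also note in passing that the derivation implicitly requires $z_{j_0},\tilde{z}_{j_0}\in(0,1)$, which holds automatically for softmax outputs, so the logarithm and the reciprocals are well defined.
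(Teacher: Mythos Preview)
Your argument is correct and follows essentially the same route as the paper: isolate the partition sum over the untargeted coordinates (the paper calls it $B$, you call it $S$), compute $z_{j_0}^{-1}-1$ and $\tilde{z}_{j_0}^{-1}-1$, and take their ratio so the unknown factor cancels. The only cosmetic difference is that the paper splits $y_{j_0}=A+b_{S,j_0}$ and writes the faulted logit as $-b_{S,j_0}+A$, whereas you keep $y_{j_0}$ intact and write $\tilde{y}_{j_0}=y_{j_0}-2b_{S,j_0}$; the algebra is otherwise identical.
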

\begin{proof}
Let $j_0$ be given and let $\boldsymbol{x}$ be any input.
For any $j\in\{1,2,\dots,m\}$,
\[
y_{j} = b_{S,j} + \sum_{i=1}^n I_iw_{ij},\ 
z_{j} = \frac{1}{1 + \exp (-y_{j})}.
\]
In particular,
\[
y_{j_0} = b_{S,j_0} + \sum_{i=1}^n I_iw_{ij_0},\ 
z_{j_0} = \frac{1}{1 + \exp (-y_{j_0})}.
\]
Let
\begin{eqnarray*}
A &:=& \sum_{i=1}^n I_iw_{ij_0} = y_{j_0}-b_{S,j_0}.
\end{eqnarray*}
We have
\begin{eqnarray*}
z_{j_0}&=&\frac{1}{1 + \exp (-A-b_{S,j_0})},\\ \tilde{z}_{j_0}&=&\frac{1}{1 + \exp (-A+b_{S,j_0})}.
\end{eqnarray*}
We note that by definition of sigmoid, $z_{i_0}>0$ and $\tilde{z}_{i_0}>0$.
\begin{eqnarray*}
\frac{1}{z_{j_0}}-1&=&\exp (-A-b_{S,j_0})\\
\frac{1}{\tilde{z}_{j_0}}-1&=&\exp (-A+b_{S,j_0})
\end{eqnarray*}

By definition of sigmoid, $z_{j_0}^{-1}>1$,
\[
\frac{\tilde{z}_{j_0}^{-1}-1}{z_{j_0}^{-1}-1} =  \exp(2b_{S,j_0})\Longrightarrow b_{S,j_0} = \frac{1}{2}\ln\left(\frac{\tilde{z}_{j_0}^{-1}-1}{z_{j_0}^{-1}-1}\right).
\]
\end{proof}
\begin{cor}
The attacker can recover the bias vector $\boldsymbol{b}_S$ with $m$ faults and $2m$ executions of the target neural network (the student neural network). 
\end{cor}

\begin{thm}
For any $i_0\in\{1,2,\dots,n\}, j_0\in\{1,\dots,m\}$ and any $\boldsymbol{x}$, a non-vanishing input for $i_0$.
Suppose a SNIFF on target weight parameter $w_{i_0j_0}$ was carried out.
Let $z_{j_0}$ and $\tilde{z}_{j_0}$ denote the correct and faulted value of $\mathcal{O}_{\theta,{j_0}}(\boldsymbol{x})$. 
Then the target weight $w_{i_0j_0}$ can be recovered as:
\[
w_{i_0j_0} = \frac{1}{2I_{i_0}}\ln\left(\frac{\tilde{z}_{j_0}^{-1}-1}{z_{j_0}^{-1}-1}\right).
\]
\end{thm}
\begin{proof}
Let $i_0,j_0$ be given, and let $\boldsymbol{x}$ be a non-vanishing input for $i_0$.
We let $w_{ij}$ denote the $(i,j)$th entry of the weight matrix $W_S$.
And let $b_{S,j}$ denote the $j$th entry of the bias vector $\boldsymbol{b}_S$.
Then for any $j\in\{1,2,\dots,m\}$,
\[
y_{j} = b_{S,j} + \sum_{i=1}^n I_iw_{ij},\ 
z_{j} = \frac{1}{1 + \exp (-y_{j})}.
\]
In particular,
\[
y_{j_0} = b_{S,j_0} + \sum_{i=1}^n I_iw_{ij_0},\ 
z_{j_0} = \frac{1}{1 + \exp (-y_{j_0})}.
\]
Let
\begin{eqnarray*}
A &:=& b_{S,j_0} + \sum_{i=1, i\neq {i_0}}^n I_iw_{ij_0} = y_{j_0}-I_{i_0}w_{i_0j_0}.
\end{eqnarray*}
We have
\begin{eqnarray*}
z_{j_0}&=&\frac{1}{1 + \exp (-A-I_{i_0}w_{i_0j_0})},\\ \tilde{z}_{j_0}&=&\frac{1}{1 + \exp (-A+I_{i_0}w_{i_0j_0})}.
\end{eqnarray*}
We note that by definition of sigmoid, $z_{i_0}>0$ and $\tilde{z}_{i_0}>0$.
\begin{eqnarray*}
\frac{1}{z_{j_0}}-1&=&\exp (-A-I_{i_0}w_{i_0j_0})\\
\frac{1}{\tilde{z}_{j_0}}-1&=&\exp (-A+I_{i_0}w_{i_0j_0})
\end{eqnarray*}
Since $\boldsymbol{x}$ is a non-vanishing input for $i_0$, we have $I_{i_0}\neq 0$.
Also by definition of sigmoid, $z_{i_0}^{-1}>1$.
Together with the above equations,
\[
\frac{\tilde{z}_{j_0}^{-1}-1}{z_{j_0}^{-1}-1} =  \exp(2I_{i_0}w_{i_0j_0})\Longrightarrow w_{i_0j_0} = \frac{1}{2I_{i_0}}\ln\left(\frac{\tilde{z}_{j_0}^{-1}-1}{z_{j_0}^{-1}-1}\right).
\]
\end{proof}
\begin{cor}
The attacker can recover the weight matrix $W_S$ with $mn$ faults and $2mn$ executions of the targeted neural network (the student neural network). 
\end{cor}
\subsubsection{Tanh}
In case the activation function for the last layer is tanh, for an input $\boldsymbol{x}$, the output of the student neural network is given by
\[
\mathcal{O}_{\theta}(\boldsymbol{x})=\text{tanh}(\mathcal{O}_{\theta_{T,-1}}(\boldsymbol{x})^TW_S+\boldsymbol{b}_S),
\]
for $i=1,2,\dots,m$,
\[
\mathcal{O}_{\theta,i}(\boldsymbol{x}) = \frac{\exp(y_i(\boldsymbol{x})) - \exp(-y_i(\boldsymbol{x}))}{\exp(y_i(\boldsymbol{x}))+\exp (-y_i(\boldsymbol{x}))}.
\]

\begin{thm}
For any $j_0\in\{1,2,\dots,m\}$ and any input $\boldsymbol{x}$.
Suppose a SNIFF on target bias parameter $b_{S,j_0}$ was carried out.
Let $z_{j_0}$ and $\tilde{z}_{j_0}$ denote the correct and faulted value of $\mathcal{O}_{\theta,{j_0}}(\boldsymbol{x})$. 
Then the target weight $b_{S,j_0}$ can be recovered as:
\[
b_{S,j_0} = \frac{1}{4}\ln\frac{(1+z_{j_0})(1-\tilde{z}_{j_0})}{(1-z_{j_0})(1+\tilde{z}_{j_0})}.
\]
\end{thm}
\begin{proof}
Let $j_0$ be given and let $\boldsymbol{x}$ be any input.
For any $j\in\{1,2,\dots,m\}$,
\begin{eqnarray*}
y_{j} &=& b_{S,j} + \sum_{i=1}^n I_iw_{ij},\\ 
z_{j} &=& \frac{\exp (y_{j}) - \exp (-y_{j})}{\exp (y_{j}) + \exp (-y_{j})} = 1-\frac{2}{\exp(2y_{j}) + 1}.
\end{eqnarray*}
In particular,
\[
y_{j_0} = b_{S,j_0} + \sum_{i=1}^n I_iw_{ij_0},\ 
z_{j_0} = 1-\frac{2}{\exp(2y_{j_0}) + 1}.
\]
Let
\begin{eqnarray*}
A &:=& \sum_{i=1}^n I_iw_{ij_0} = y_{j_0}-b_{S,j_0}.
\end{eqnarray*}
We have
\begin{eqnarray*}
z_{j_0}&=&1-\frac{2}{\exp(2A+2b_{S,j_0}) + 1},\\ \tilde{z}_{j_0}&=&1-\frac{2}{\exp(2A-2b_{S,j_0}) + 1}.
\end{eqnarray*}
We note that by definition of tanh, $z_{i_0}<1$ and $\tilde{z}_{i_0}<1$.
\begin{eqnarray*}
\frac{1 + z_{j_0}}{1 - z_{j_0}}&=&\exp(2A+2b_{S,j_0})\\
\frac{1 + \tilde{z}_{j_0}}{1-\tilde{z}_{j_0}}&=&\exp(2A-2b_{S,j_0}),
\end{eqnarray*}
which gives
{\small
\[
\frac{(1+z_{j_0})(1-\tilde{z}_{j_0})}{(1-z_{j_0})(1+\tilde{z}_{j_0})}=\exp(4b_{S,j_0})\Longrightarrow b_{S,j_0} = \frac{1}{4}\ln\frac{(1+z_{j_0})(1-\tilde{z}_{j_0})}{(1-z_{j_0})(1+\tilde{z}_{j_0})}.
\]}
\end{proof}
\begin{cor}
The attacker can recover the bias vector $\boldsymbol{b}_S$ with $m$ faults and $2m$ executions of the target neural network (the student neural network). 
\end{cor}
\begin{thm}
For any $i_0\in\{1,2,\dots,n\}, j_0\in\{1,\dots,m\}$ and any $\boldsymbol{x}$, a non-vanishing input for $i_0$.
Suppose a SNIFF on target weight parameter $w_{i_0j_0}$ was carried out.
Let $z_{j_0}$ and $\tilde{z}_{j_0}$ denote the correct and faulted value of $\mathcal{O}_{\theta,{j_0}}(\boldsymbol{x})$. 
Then the target weight $w_{i_0j_0}$ can be recovered as:
\[
w_{i_0j_0} = \frac{1}{4I_{i_0}}\ln\frac{(1+z_{j_0})(1-\tilde{z}_{j_0})}{(1-z_{j_0})(1+\tilde{z}_{j_0})}.
\]
\end{thm}
\begin{proof}
Let $j_0$ be given and let $\boldsymbol{x}$ be any input.
For any $j\in\{1,2,\dots,m\}$,
\[
y_{j} = b_{S,j} + \sum_{i=1}^n I_iw_{ij},\
z_{j} = 1-\frac{2}{\exp(2y_{j}) + 1}.
\]
In particular,
\[
y_{j_0} = b_{S,j_0} + \sum_{i=1}^n I_iw_{ij_0},\ 
z_{j_0} = 1-\frac{2}{\exp(2y_{j_0}) + 1}.
\]
Let
\begin{eqnarray*}
A &:=& b_{S,j_0} + \sum_{i=1, i\neq {i_0}}^n I_iw_{ij_0} = y_{j_0}-I_{i_0}w_{i_0j_0}.
\end{eqnarray*}
We have
\begin{eqnarray*}
z_{j_0}&=&1-\frac{2}{\exp(2A+2I_{i_0}w_{i_0j_0}) + 1},\\ \tilde{z}_{j_0}&=&1-\frac{2}{\exp(2A-2I_{i_0}w_{i_0j_0}) + 1}.
\end{eqnarray*}
We note that by definition of tanh, $z_{i_0}<1$ and $\tilde{z}_{i_0}<1$.
\begin{eqnarray*}
\frac{1 + z_{j_0}}{1 - z_{j_0}}&=&\exp(2A+2I_{i_0}w_{i_0j_0})\\
\frac{1 + \tilde{z}_{j_0}}{1-\tilde{z}_{j_0}}&=&\exp(2A-2I_{i_0}w_{i_0j_0}),
\end{eqnarray*}
which gives
{\small
\begin{eqnarray*}
\frac{(1+z_{j_0})(1-\tilde{z}_{j_0})}{(1-z_{j_0})(1+\tilde{z}_{j_0})}=\exp(4I_{i_0}w_{i_0j_0})\\
\Longrightarrow w_{i_0j_0} = \frac{1}{4I_{i_0}}\ln\frac{(1+z_{j_0})(1-\tilde{z}_{j_0})}{(1-z_{j_0})(1+\tilde{z}_{j_0})}.
\end{eqnarray*}}
\end{proof}
\begin{cor}
The attacker can recover the weight matrix $W_S$ with $mn$ faults and $2mn$ executions of the targeted neural network (the student neural network). 
\end{cor}
\subsubsection{Relu}
In case the activation function for the last layer is relu, for an input $\boldsymbol{x}$, the output of the student neural network is given by
\[
\mathcal{O}_{\theta}(\boldsymbol{x})=\max\{0,\mathcal{O}_{\theta_{T,-1}}(\boldsymbol{x})^TW_S+\boldsymbol{b}_S\},
\]
for $j=1,2,\dots,m$,
\[
\mathcal{O}_{\theta,j}(\boldsymbol{x}) = \max\{0,y_j(\boldsymbol{x})\}.
\]
When the output of the activation function is $0$, we cannot get much information.
Thus, for relu, we need to consider an additional faulting position, position 5 in Figure~\ref{fig:activation}.
The effect of the fault is to flip the sign of the result of the summation, i.e. $y_j$ with notation above.
Thus in case the output of relu was original zero, after fault, the output would be absolute value of the summation, i.e. $-y_j$.

Given any $i_0\in\{1,2,\dots,n\}, j_0\in\{1,\dots,m\}$ and any input $\boldsymbol{x}$.
Then there are two steps for the attack:

\noindent\textbf{Step 1:}
\begin{enumerate}
    \item If $\mathcal{O}_{\theta,{j_0}}(\boldsymbol{x})\neq0$.
    Let $z_{j_0}$ denote $\mathcal{O}_{\theta,{j_0}}(\boldsymbol{x})$.
    \item If $\mathcal{O}_{\theta,{j_0}}(\boldsymbol{x})=0$.
    The attacker executes the inference with the same input and inject SNIFF on the summation $y_i$.
    Let $z_{j_0}$ denote the negative of the faulted value of $\mathcal{O}_{\theta,{j_0}}(\boldsymbol{x})$.
\end{enumerate}
\textbf{Step 2:} The attacker executes the inference with the same input and inject SNIFF on target parameter - bias $b_{S,j_0}$ or weight $w_{i_0,j_0}$.
\begin{enumerate}
    \item If the faulted value of $\mathcal{O}_{\theta,{j_0}}(\boldsymbol{x})\neq0$.
    Let $\tilde{z}_{j_0}$ denote the faulted value of $\mathcal{O}_{\theta,{j_0}}(\boldsymbol{x})$.
    \item Otherwise, the attacker executes the inference with the same input and inject SNIFF on both the target parameter and the summation $y_i$.
    Let $\tilde{z}_{j_0}$ denote the negative of the faulted value of $\mathcal{O}_{\theta,{j_0}}(\boldsymbol{x})$.
\end{enumerate}
\begin{thm}
For any $j_0\in\{1,2,\dots,m\}$ and any input $\boldsymbol{x}$.
Following the attack steps described above, the target bias can be recovered as
\[
b_{S,j_0}=\frac{1}{2}(z_{j_0} - \tilde{z}_{j_0})
\]
\end{thm}
\begin{proof}
We note that with the above attack, we have
\[
z_{j_0} = y_{j_0} = b_{S,j_0} + \sum_{i=1}^n I_iw_{ij_0}, \tilde{z}_{j_0} = \tilde{y_{j_0}} = -b_{S,j_0} + \sum_{i=1}^n I_iw_{ij_0}.
\]
Thus
\[
b_{S,j_0} = \frac{1}{2}(z_{j_0} - \tilde{z}_{j_0})
\]
\end{proof}
\begin{cor}
The attacker can recover the bias vector $\boldsymbol{b}_S$ with at most $3m$ faults and at most $4m$ executions of the target neural network (the student neural network). 
\end{cor}
\begin{thm}
For any $i_0\in\{1,2,\dots,n\}, j_0\in\{1,\dots,m\}$ and any $\boldsymbol{x}$, a non-vanishing input for $i_0$.
Following the attack steps described above, the target weight $w_{i_0j_0}$ can be recovered as:
\[
w_{i_0j_0} = \frac{1}{2I_{i_0}}(z_{j_0} - \tilde{z}_{j_0}).
\]
\end{thm}
\begin{proof}
We note that with the above attack, we have
\begin{eqnarray*}
z_{j_0} &=& y_{j_0} = b_{S,j_0} + \sum_{i=1}^n I_iw_{ij_0},\\ \tilde{z}_{j_0} &=& \tilde{y_{j_0}} = b_{S,j_0} - I_{i_0}w_{i_0j_0}+ \sum_{i=1,i\neq i_0}^n I_iw_{ij_0}.
\end{eqnarray*}
Since $\boldsymbol{x}$ is a non-vanishing input for $i_0$, we have $I_{i_0}\neq 0$.
We have
\[
w_{i_0j_0} = \frac{1}{2I_{i_0}}(z_{j_0} - \tilde{z}_{j_0})
\]
\end{proof}
\begin{cor}
The attacker can recover the weight matrix $W_S$ with at most $3mn$ faults and at most $4mn$ executions of the targeted neural network (the student neural network). 
\end{cor}

\end{document}